\documentclass[sigconf, screen]{acmart}
\AtBeginDocument{%
  }

\usepackage{pifont}

\usepackage{hyperref}
\usepackage{url}
\usepackage{bm}
\usepackage{enumitem}
\usepackage{wrapfig}
\usepackage{booktabs}
\usepackage{multirow}
\usepackage{caption}
\usepackage{graphicx}
\usepackage{subfigure}
\usepackage{wrapfig}
\usepackage{mathtools}
\usepackage{amsmath,amsfonts, amsthm}
\usepackage{tikz}
\usepackage{algorithm}
\usepackage{algpseudocode}
\usepackage{threeparttable}
\usepackage{xspace}
\usepackage{tcolorbox}

\newcommand{\myvector}{\includegraphics[height=1em]{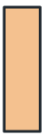}\xspace}

\PassOptionsToPackage{dvipsnames}{xcolor,colortbl}
\RequirePackage{xcolor,colortbl}[dvipsnames] 
\definecolor{darkcerulean}{rgb}{0.03, 0.27, 0.49}
\definecolor{burgundy}{rgb}{0.7, 0.0, 0.13}
\definecolor{majorelleblue}{rgb}{0.38, 0.31, 0.86}
\definecolor{forestgreen}{rgb}{0.13, 0.55, 0.13}
\definecolor{arylideyellow}{rgb}{0.89, 0.61, 0.06}
\hypersetup{
    colorlinks=true, 
    linkcolor=burgundy,    
    urlcolor=majorelleblue     
}
\usepackage{array}
\usepackage{arydshln}
\algnewcommand{\LineComment}[1]{\State\textcolor{blue}{\(\triangleright\)#1}}

\setcopyright{acmlicensed}
\copyrightyear{2018}
\acmYear{2018}
\acmDOI{XXXXXXX.XXXXXXX}
\acmConference[Conference acronym 'XX]{Make sure to enter the correct
  conference title from your rights confirmation emai}{June 03--05,
  2018}{Woodstock, NY}
\acmISBN{978-1-4503-XXXX-X/18/06}




\begin{document}

\title{CG-RAG: Research Question Answering by Citation Graph Retrieval-Augmented LLMs}

\author{Yuntong Hu}
\email{yuntong.hu@emory.edu}
\affiliation{%
  \institution{Emory University}
  \city{Atlanta}
  \state{GA}
  \country{USA}
}

\author{Zhihan Lei}
\email{zhihan.lei@emory.edu}
\affiliation{%
  \institution{Emory University}
  \city{Atlanta}
  \state{GA}
  \country{USA}
}

\author{Zhongjie Dai}
\email{dzj@tongji.edu.cn}
\affiliation{%
  \institution{Tongji University}
  \city{Shanghai}
  \country{China}
}

\author{Allen Zhang}
\email{azhang490@gatech.edu}
\affiliation{%
  \institution{Georgia Institute of Technology}
  \city{Atlanta}
  \state{GA}
  \country{USA}
}

\author{Abhinav Angirekula}
\email{	aa125@illinois.edu}
\affiliation{%
  \institution{University of Illinois Urbana-Champaign}
  \city{Urbana}
  \state{IL}
  \country{USA}
}

\author{Zheng Zhang}
\email{zheng.zhang@emory.edu}
\affiliation{%
  \institution{Emory University}
  \city{Atlanta}
  \state{GA}
  \country{USA}
}

\author{Liang Zhao}
\email{liang.zhao@emory.edu}
\affiliation{%
  \institution{Emory University}
  \city{Atlanta}
  \state{GA}
  \country{USA}
}

\renewcommand{\shortauthors}{Yuntong et al.}

\begin{abstract}
Research question answering requires accurate retrieval and contextual understanding of scientific literature. However, current Retrieval-Augmented Generation (RAG) methods often struggle to balance complex document relationships with precise information retrieval. 
In this paper, we introduce Contextualized Graph Retrieval-Augmented Generation (CG-RAG), a novel framework that integrates sparse and dense retrieval signals within graph structures to enhance retrieval efficiency and subsequently improve generation quality for research question answering. First, we propose a contextual graph representation for citation graphs, effectively capturing both explicit and implicit connections within and across documents. Next, we introduce Lexical-Semantic Graph Retrieval (LeSeGR), which seamlessly integrates sparse and dense retrieval signals with graph encoding. It bridges the gap between lexical precision and semantic understanding in citation graph retrieval, demonstrating generalizability to existing graph retrieval and hybrid retrieval methods. Finally, we present a context-aware generation strategy that utilizes the retrieved graph-structured information to generate precise and contextually enriched responses using large language models (LLMs). Extensive experiments on research question answering benchmarks across multiple domains demonstrate that our CG-RAG framework significantly outperforms RAG methods combined with various state-of-the-art retrieval approaches, delivering superior retrieval accuracy and generation quality.
\end{abstract}

\begin{CCSXML}
<ccs2012>
   <concept>
       <concept_id>10010147.10010178.10010219</concept_id>
       <concept_desc>Computing methodologies~Natural language generation</concept_desc>
       <concept_significance>500</concept_significance>
   </concept>
   <concept>
       <concept_id>10002951.10003317.10003338</concept_id>
       <concept_desc>Information systems~Retrieval models and ranking</concept_desc>
       <concept_significance>500</concept_significance>
   </concept>
</ccs2012>
\end{CCSXML}

\ccsdesc[500]{Computing methodologies~Natural language generation}
\ccsdesc[500]{Information systems~Retrieval models and ranking}

\keywords{Graph retrieval-augmented generation, research question answering, hybrid retrieval, graph learning, citation graphs}

\received{20 February 2007}
\received[revised]{12 March 2009}
\received[accepted]{5 June 2009}

\maketitle

\section{Introduction}

\begin{figure*}[tb]
  \centering
  \includegraphics[width=\textwidth]{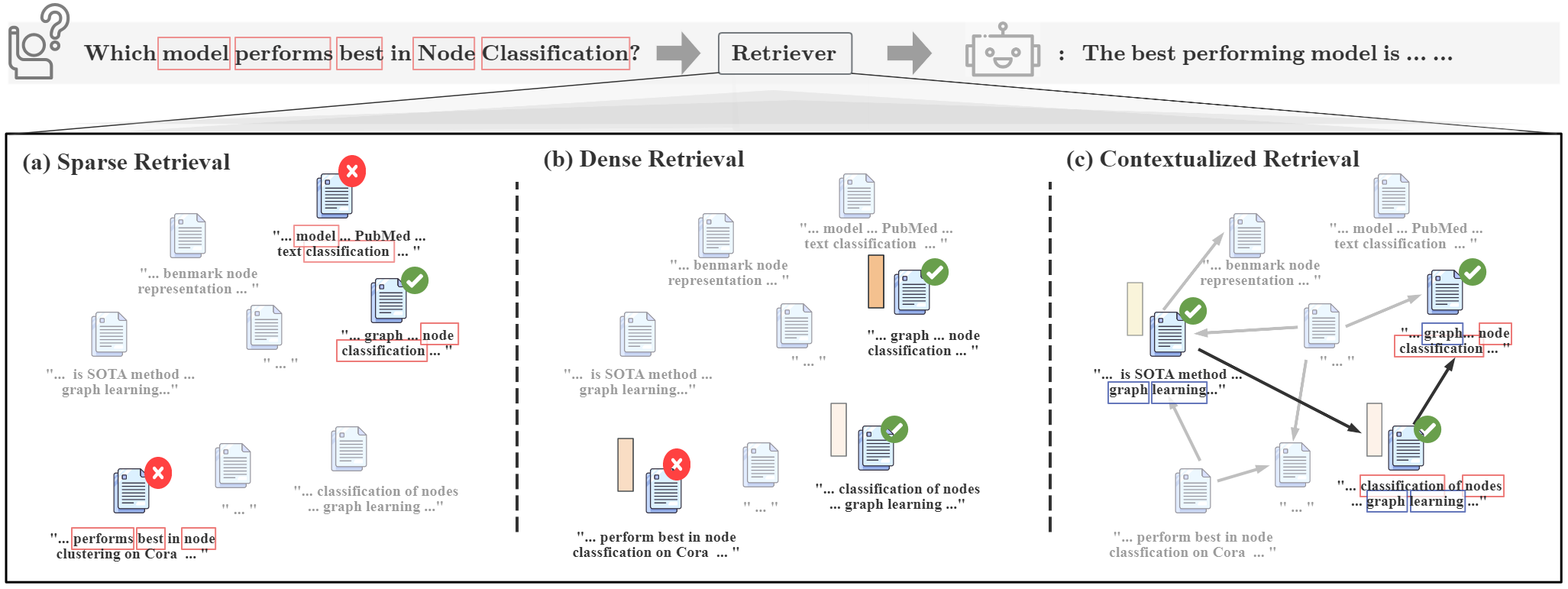}
  \caption{Illustration of retrieval-augmented research question answering using: (a) sparse retrieval based on lexical matches, (b) dense retrieval based on semantic relevance, and (c) contextualized retrieval leveraging graph context, i.e., interactions between documents. \myvector represents the dense embedding, where deeper colors indicate a higher semantic relevance to the question. \textcolor{red}{\fbox{\textcolor{black}{Boxed text}}} in red highlights the matched terms between questions and documents, while \textcolor{blue}{\fbox{\textcolor{black}{Boxed text}}} in blue highlights the matched terms between documents.}
  \vspace{-10pt}
  \label{fig:intro}
\end{figure*}

Question answering is a critical domain recently driven by advancements in Large Language Models (LLMs). While LLMs exhibit exceptional capabilities in addressing commonsense questions \citep{lehmann2024large, hu2023beyond, chang2024survey}, their pre-trained knowledge inevitably becomes outdated over time, rendering them insufficient for delivering timely, precise, and comprehensive answers, particularly for complex scientific and domain-specific questions. Additionally, the substantial cost of continuously fine-tuning and updating LLMs makes maintaining their relevance impractical.

For open-domain question answering, which requires relevant contexts for precise answers, Retrieval-Augmented Generation (RAG) \citep{lewis2020retrieval} offers a promising solution. By integrating external knowledge, RAG addresses the limitations of static pre-trained LLMs, enabling access to up-to-date, domain-specific information and thereby enhancing answer accuracy \citep{siriwardhana2023improving, taffa2023leveraging, wang2024rear}.
RAG comprises two components: a retriever, which identifies relevant information from the database based on the query, and a generator, which utilizes the retrieved information to construct responses. The quality of generation heavily depends on the effectiveness of the retrieval process \citep{yu2024evaluation, zhao2024retrieval}. 
Traditional retrieval methods primarily focus on lexical and semantic relevance to evaluate the relevance between documents and queries, utilizing sparse and dense retrieval signals, respectively \citep{luan2021sparse, zhao2024dense}. For graph-based databases, however, these methods fall short, as they fail to account for intricate inter-document relationships. For example, simple retrieval methods often overlook critical citation links in paper citation networks, which are essential for capturing nuanced connections between documents.

Unlike typical QA tasks, which range from layman-level to domain-expert-level based on well-curated documents or knowledge graphs, this paper focuses on more challenging QA at the research frontier, which can only be addressed by analyzing the body of research papers, a task referred to as \textbf{research question answering}. Research question answering necessitates considering connections between documents through citation links to ensure comprehensiveness \citep{lozano2023clinfo, giglou2024scholarly, wang2024biorag}. In citation graphs, the relevant papers are connected not only semantically but also via citation links, encapsulating structured contextual information. Leveraging this contextual information is essential for enhancing both retrieval accuracy and the quality of generated responses, as illustrated in \hyperref[fig:intro]{Figure} \ref{fig:intro}. Specifically, as shown in \hyperref[fig:intro]{Figure} \ref{fig:intro}\hyperref[fig:intro]{(c)}, each paper requires an evaluation of lexical and semantic relevance, as well as its citations to other relevant documents. 
For instance, a paper focused on "graph representation learning" is theoretically correlated to a query on "node classification," but it cannot be retrieved using sparse or dense retrieval alone. When multi-hop citation links are considered, however, it becomes highly relevant. Notably, not all citation-linked papers are pertinent to the query, requiring an approach that integrates lexical and semantic relevance, structural patterns, and graph-based context to ensure accurate retrieval.

To tackle this problem, we propose \textbf{\textit{Contextualized Graph Retrieval-Augmented Generation (CG-RAG)}}, a novel framework tailored for research question answering. An important consideration of research question answering is that research paper content is highly heterogeneous: for instance, the information in the related work, methodology, and experimental sections each serve different purposes and answer distinct types of questions. Therefore, effective modeling requires breaking down documents into semantic chunks—coherent sections representing specific aspects of a research paper. To achieve this, we mathematically construct a citation graph at the chunk level, where each chunk represents a semantically meaningful module of the paper. These chunks form a graph structure with intra-paper and inter-paper connections, capturing both internal coherence and external relationships. To represent this, we introduce the \hyperref[sec:graph]{\textit{\textbf{Contextual Citation Graph}}}, which decomposes citation graphs into chunk-level granularity, enabling fine-grained relationship discovery.

To synergize lexical and semantic retrieval signals inside and across networked documents in citation graphs, we propose the \hyperref[sec:fusion]{\textit{\textbf{\underline{Le}xical-\underline{Se}mantic \underline{G}raph \underline{R}etrieval (LeSeGR)}}} method. This approach convolves over query-relevant subgraphs, where edges and nodes are characterized by lexical and semantic relevance scores. Importantly, we theoretically demonstrate that the existing post-retrieval paradigm is a special case of our approach. When documents are contextually linked, entangling retrieval signals through graph structures enhances retrieval performance.
Finally, the subgraph embeddings are used as input to LLMs for generating final answers. 
Our experiments conducted on citation graphs from diverse scientific domains demonstrate the superior retrieval and generation effectiveness of CG-RAG. Furthermore, our evaluations reveal that retrieval-augmented LLMs equipped with CG-RAG outperform state-of-the-art retrieval strategies, significantly enhancing the quality of LLM-generated responses.

The rest of the paper is organized as follows. \hyperref[sec:related work]{Section} \ref{sec:related work} reviews related work on retrieval-augmented generation. \hyperref[sec:problem]{Section} \ref{sec:problem} highlights key aspects of retrieval-augmented research question answering within the context of citation graphs. \hyperref[sec:method]{Section} \ref{sec:method} introduces a novel formulation that extends beyond the current retrieval paradigm and details our proposed retrieval strategy and retrieval-augmented generation method. \hyperref[sec:exp]{Section} \ref{sec:exp} presents experimental results, comparing our approach with RAG frameworks using various state-of-the-art retrieval methods for research question answering. Finally, \hyperref[sec:conclution]{Section} \ref{sec:conclution} concludes the paper with key insights.

\section{Related Work}\label{sec:related work}
\subsection{Information Retrieval}
Information retrieval (IR) focuses on extracting relevant information from large corpora.
Two primary retrieval techniques dominate the field: \textit{sparse retrieval} and \textit{dense retrieval}. Sparse retrieval methods, such as TF-IDF \citep{salton1975vector} and BM25 \citep{robertson2009probabilistic}, rely on term-based representations to evaluate lexical matches between queries and documents. These approaches perform well in scenarios where exact term matching is essential, but they struggle with semantic meaning.
In contrast, dense retrieval methods leverage pre-trained language models such as BERT \citep{devlin2018bert} to encode queries and documents as continuous, low-dimensional embeddings, capturing semantic similarity through maximum inner product search (MIPS) \citep{reimers2019sentence, xiao2024c, karpukhin2020dense, xiong2020approximate}. Dense retrieval effectively overcomes the lexical gap, retrieving semantically related results even when query terms differ from the document's terminology.
Recently, \textit{hybrid retrieval} techniques have also emerged to combine the strengths of sparse and dense methods while addressing their respective limitations \citep{luan2021sparse, mandikal2024sparse, novotny2022combining}. By integrating sparse and dense signals, these approaches provide a robust solution for retrieving relevant information from long and complex documents.

\subsection{Retrieval-Augmented Generation} 
Retrieval-Augmented Generation (RAG) is a technique that integrates external retrieval systems to enhance large language models (LLMs) \citep{lewis2020retrieval, li2022survey, gao2023retrieval}. Unlike traditional LLMs that rely solely on pre-trained knowledge, RAG leverages external sources during inference, enabling more accurate and up-to-date responses. This makes RAG particularly effective for specialized tasks, such as literature-based question answering \citep{zakka2024almanac, muludi2024retrieval}.
Naturally, current literature question-answering systems are predominantly built upon RAG frameworks \citep{wang2021literatureqa, stocker2023fair, auer2020improving, giglou2024scholarly}, relying mainly on dense retrieval combined with LLMs. Literature data, however, is inherently graph-structured, where topological information, such as hierarchical and citation relationships, plays a crucial role in the retrieval and generation processes. As such, most existing literature question answering methods fail to incorporate this structural context effectively.
Recently, GraphRAG \citep{hu2024grag, edge2024local, he2024g} was introduced to extend RAG to graph-related scenarios, offering the potential to capture complex interconnections in literature datasets by leveraging graph-based relationships such as hierarchical and citation structures.
In our study, we focus on leveraging graph context to enhance the retrieval and generation processes in literature question answering.

\section{Research Question Answering}\label{sec:problem}
\textbf{Citation Graph.}\quad A citation Graph, $\mathcal{G} = (V, E, \{d_v\}_{v \in V})$, consists of a node set $V$ and an edge set $E$, where each node $v \in V$ is associated with natural language attributes in a paper. These papers, represented by $\mathcal{D} = \{d_v\}_{v \in V}$, include textual information such as abstracts, sections, and other relevant content.

\vspace{2pt}
\noindent \textbf{Research Question Answering.}\quad Given a query \(q\) over the citation graph \(\mathcal{G}\), the objective of research question answering is to generate an appropriate answer by leveraging the relevant information retrieved from \(\mathcal{G}\).
Formally, this objective is defined as:
\begin{align}
    p_{\theta}(Y|q, \mathcal{G}) = \arg\max_{\theta} \prod_{i=1}^{n} p_{\theta}(y_i|y_{<i}, q, \mathcal{G}),
    \label{output}
\end{align}
where \(\theta\) represents the parameters of the generative language model, \(Y = \{y_i\}_{i=1}^{n}\) is the generated answer sequence, and \(y_{<i}\) denotes the prefix tokens of the sequence up to position \(i-1\).

The quality of the generated answer is highly dependent on the effectiveness of the retrieval process within the citation graph.
Let \( f_o \) denote the relevance scoring function for a retrieval system \( o \). Recent advances in graph-based retrieval leverage structural information, formalized as:  
\begin{equation}
f_{\text{graph}}: g(f_{\text{dense}}) \to \mathbb{R},
\end{equation}  
where \( g(\cdot) \) encodes topological information. These methods primarily rely on dense representations, which often fail to capture sparse lexical matches—essential in citation graphs for identifying exact cross-references and key terms critical to retrieval accuracy.
Thus, integrating sparse and dense retrieval is essential.  
Existing hybrid retrieval systems combine these signals via post-retrieval fusion:  
$f_{\text{hybrid}}: f_{\text{sparse}} \bigoplus f_{\text{dense}} \to \mathbb{R},$
where \( \bigoplus \) denotes operations such as score fusion. Treating documents as isolated entities, however, limits their effectiveness in structured databases such as citation graphs. Designing a retrieval system that is both lexically and semantically aware in graph retrieval remains an unresolved challenge.

\section{Methodology}\label{sec:method}
\subsection{Overview} 

To overcome these limitations, we propose Contextualized Graph Retrieval-Augmented Generation (CG-RAG), introducing a novel retrieval method called \textbf{\underline{Le}xical-\underline{Se}mantic \underline{G}raph \underline{R}etrieval (LeSeGR)}, which integrates discrete sparse signals and continuous dense signals in a manner that respects the graph topology. Formally, the paradigm is defined as:
\begin{equation}
f_{\text{entangled}}: g(f_{\text{sparse}} \bigotimes f_{\text{dense}}) \to \mathbb{R},
\end{equation}
where \(g(\cdot)\) is a graph encoder that incorporates structured context during retrieval, and \(\bigotimes\) represents the entangled fusion of sparse and dense signals. The transition to $f_{\text{entangled}}$ offers greater generality and capabilities, but requires addressing fundamental challenges:
\begin{itemize}[left=0pt]
    \item \hyperref[sec:graph]{Section} \ref{sec:graph} introduces the contextual citation graph, which captures chunk-level cross-relationships by surrounding each chunk with its relevant context.
    \item \hyperref[sec:fusion]{Section} \ref{sec:fusion} introduces the Lexical-Semantic Graph Retrieval that integrates sparse and dense signals within graph-based scenarios. We prove that it encompasses existing hybrid retrieval methods based on post-retrieval approaches as a special case when graph contextual information is absent and extends dense-signal-only graph retrieval, highlighting its generalizability to current retrieval frameworks.
    \item \hyperref[sec:generation]{Section} \ref{sec:generation} introduces Contextualized Graph Retrieval-Augmented Generation that leverages retrieved contextual subgraphs by LeSeGR to improve the quality of generated responses.
\end{itemize}

\subsection{Contextual Citation Graph}\label{sec:graph}
Research paper content is highly diverse, with sections like related work, methodology, and experiments serving distinct purposes and answering different questions, necessitating the segmentation of documents into semantic chunks that capture specific aspects. Given a citation graph \(\mathcal{G} = (V, E, \{d_v\}_{v \in V})\), each document \(d_v\) is decomposed into a set of chunks \(C_v\), with all chunks collectively forming \(\mathcal{C} = \bigcup_{v \in V} C_v\).  
Chunks may reference each other within the same document (\textit{intra-connections}) or across different documents (\textit{inter-connections}). Intra-connections are explicit, such as a section referencing earlier subsections within the same paper, while inter-connections occur when one paper cites another without explicitly linking to specific chunks within it. Intra-document links typically provide highly relevant context, whereas inter-document links offer supplementary but less significant information.
To capture these interactions during retrieval, we propose the hierarchical citation graph, modeling relationships both within and across documents. Formally, it is defined as \(\bar{\mathcal{G}} = (\bar{V}, \bar{E}, \mathcal{C})\), where \(\mathcal{C} = \{c_i\}_{i \in \bar{V}}\) represents the set of chunks.

\subsubsection{Chunk Node} 
Each chunk $c_i \in \mathcal{C}$ corresponds to a fixed-length segment of text extracted from a document in the citation graph $\mathcal{G}$. Specifically, documents are divided into chunks with a maximum token length of $l$, such that a document with a total token length of $L$ is divided into $\lceil \frac{L}{l} \rceil$ chunks. These chunks serve as the nodes in the hierarchical citation graph $\bar{\mathcal{G}}$.

\subsubsection{Chunk-Chunk Edge} 
The edges in $\bar{\mathcal{G}}$ represent relationships between chunks, capturing both intra- and inter-document connections. Edge weights reflect the strength and nature of these relationships.

\vspace{2pt}
\noindent \textbf{Intra-document Edges} connect chunks within the same document, preserving the logical flow and structural hierarchy of the text. When $c_i, c_j \in C_v$, an edge is established if a structural dependency exists between them, such as adjacency or explicit cross-references. 
Adjacency refers to the logical connection between two consecutive chunks, where $c_j$ precedes $c_i$, resulting in an edge $c_j \rightarrow c_i$. Explicit cross-references occur when $c_i$ refers to $c_j$, establishing an edge $c_j \rightarrow c_i$.

\vspace{2pt}
\noindent \textbf{Inter-document Edges}, in contrast, connect chunks across different documents. For a chunk \(c_i \in C_u\), the Top-\(n\) relevant chunks in \(C_v\) are linked to \(c_i\) if $(v,u)\in E$. The relevance \(r_{ij}\) between \(c_i\) and \(c_j\) is computed using the relevance scoring functions $f_{\text{sparse}}$ and $f_{\text{dense}}$: $f_{\text{sparse}}(\mathbf{c}_i^{\text{sparse}}, \mathbf{c}_j^{\text{sparse}}) + f_{\text{dense}}(\mathbf{c}_i^{\text{dense}}, \mathbf{c}_j^{\text{dense}})$, where $\mathbf{c}_i^{\text{sparse}}$ and $\mathbf{c}_i^{\text{dense}}$ are the sparse and dense representations of chunk $c_i$, respectively, and similarly for $c_j$. 

\subsection{Lexical-Semantic Graph Retrieval (LeSeGR)}\label{sec:fusion}
Given a hierarchical citation graph \(\bar{\mathcal{G}} = (\bar{V}, \bar{E}, \mathcal{C}, w)\), the representation of each chunk \(c_i \in \mathcal{C}\) is designed to combine the advantages of sparse lexical vectors and dense semantic vectors. Additionally, if the contexts around \(c_i\) contain relevant information, the representation incorporates contributions from these contextual chunks. This forms the basis of an entangled representation, integrating both sparse-dense fusion and graph contextual information.

\begin{algorithm}
\caption{Lexical-Semantic Graph Retrieval.}
\label{alg:retrieval}
\begin{algorithmic}[1]
\Require Citation graph \(\mathcal{G} = (V, E, \{d_v\}_{v \in V})\), Query \(q\)
\Ensure A list of contextual subgraphs \(\{\bar{\mathcal{G}}\}\)

\LineComment{Initialize graph structures and representations.}
\If{cached contextual graph \(\bar{\mathcal{G}}\) exists}
    \State Load \(\bar{\mathcal{G}} = (\bar{V}, \bar{E}, \mathcal{C})\)
\Else
    \State Generate \(\bar{\mathcal{G}} = (\bar{V}, \bar{E}, \mathcal{C})\) from \(\mathcal{G}\)
    \State Cache \(\bar{\mathcal{G}}\) for future use
\EndIf
\LineComment{Initialize sparse and dense representations for the query and all chunks.}
\State \(\mathbf{c}_i^{\text{sparse}}, \mathbf{c}_i^{\text{dense}} \gets\) sparse and dense encoders for \(\forall c_i \in \mathcal{C}\)
\State \(\mathbf{q}^{\text{sparse}}, \mathbf{q}^{\text{dense}} \gets\) sparse and dense encoders for $q$ 

\LineComment{Compute initial query relevance for all chunks.}
\For{each chunk \(c_i \in \mathcal{C}\)}
    \State \(\delta_{qi} \gets f_{\text{sparse}}\mathbf{q}^{\text{sparse}},\mathbf{c}_i^{\text{sparse}})\) \Comment{Sparse Signal}
    \State \(\alpha_{ij} \gets \text{MLP}_{\phi}(\mathbf{c}_i^{\text{dense}} \ominus \mathbf{c}_j^{\text{dense}})\) \Comment{Dense Signal}
\EndFor

\LineComment{Perform message passing through the graph.}
\For{each layer \(k = 1, \dots, K\)}
    \For{each chunk \(c_i \in \mathcal{C}\)}
        \LineComment{Compute messages from neighbors and itself.}
        \State \(\mathbf{m}_j^{(k)} \gets \text{MSG}^{(k)}\big(\delta_{qj} \cdot \alpha_{ij} \cdot \mathbf{h}_j^{(k)}\big) \forall j \in \{i\} \cup \mathcal{N}(i)\)
        \LineComment{Aggregate messages.}
        \State \(\mathbf{h}_i^{(k+1)} \gets \text{AGG}^{(k)}\big(\{\mathbf{m}_j^{(k)} \mid j \in \{i\} \cup \mathcal{N}(i)\}\big)\)
    \EndFor
\EndFor

\LineComment{Compute relevance scores with entangled representations.}
\For{each chunk \(c_i \in \mathcal{C}\)}
    \State \(s(q, c_i) \gets f_{\text{dense}}(\mathbf{q}^{\text{dense}}, \mathbf{h}_i^{(K)})\)
\EndFor

\LineComment{Select top-relevant chunks and construct subgraphs.}
\State \(S(\bar{\mathcal{G}}; q) \gets \text{Top-}N\) contextual subgraphs based on \(s(q, c_i)\)

\LineComment{Return the retrieved subgraphs.}
\State\Return{$S(\bar{\mathcal{G}}; q)$}

\end{algorithmic}
\end{algorithm}

The entangled representation of \(c_i\) is obtained through a graph encoder, such as a GNN, which aggregates information based on the contextual graph:
\begin{equation}
    \mathbf{H}_i = g_{\Phi}(\{\mathbf{h}_j^{(0)}, \delta_{qj}, \alpha_{ij} \mid j \in \{i\} \cup \mathcal{N}(i)\}),
\end{equation}
where \(\Phi\) denotes the parameters of the graph encoder, and \(\mathbf{h}_j^{(0)} = \mathbf{c}_j^{\text{dense}}\) is the initial dense representation of \(c_j\). The terms \(\delta_{qj}\) and \(\alpha_{ij}\) control the message passing, ensuring that only relevant information is propagated. Specifically, the message contribution from a neighboring chunk \(c_j\) to \(c_i\) and the subsequent update are defined as:
\begin{align}
    &\mathbf{m}_{j}^{(k)} = \text{MSG}^{(k)}\left(\delta_{qj} \cdot \alpha_{ij} \cdot \mathbf{h}_j^{(k)}\right), \\
    &\mathbf{h}_i^{(k+1)} = \text{AGG}^{(k)}\left(\{\mathbf{m}_j^{(k)} \mid j \in \{i\} \cup \mathcal{N}(i)\}\right),
\end{align}
where \(\mathbf{h}_j^{(k)}\) is the representation of \(c_j\) at layer \(k\).  
The term \(\delta_{qj}\) evaluates the relevance between the query and the context, while \(\alpha_{ij}\) measures the relevance between the central chunk \(c_i\) and its neighboring chunks \(c_j\).
Specifically, \(\delta_{qi}\), the sparse relevance between a query \(q\) and a chunk \(c_i\), is computed as:
\begin{equation}
\delta_{qi} = f_{\text{sparse}}(\mathbf{q}_i^{\text{sparse}}, \mathbf{c}_i^{\text{sparse}}).
\end{equation}
where $f_{\text{sparse}}$ indicates a relevance scoring function used in sparse retrieval such as cosine similarity and dot product. This incorporates sparse relevance into the dense embedding during message passing. To model the dense interaction between \(c_i\) and its neighboring chunks \(c_j \in \mathcal{N}(i)\), \(\alpha_{ij}\) is calculated as:
\begin{equation}
\alpha_{ij} = \text{MLP}_{\phi}(\mathbf{c}_i^{\text{dense}} \ominus \mathbf{c}_j^{\text{dense}}),
\end{equation}
where $\alpha_{ii}$ is defined as 1,  \(\ominus\) represents element-wise subtraction to compute the feature difference, and \(\text{MLP}_{\phi}\) parameterized by \(\phi\), adaptively assesses the relevance between chunks.

This entangled representation framework integrates sparse and dense features while leveraging structural information from the graph context, ensuring that each chunk's representation reflects both its intrinsic relevance and its contextual relationships.

Given a query \(q\) over the chunk set \(\mathcal{C}\) of a citation graph \(\mathcal{G}\), the relevance scoring function \(f_{\text{entangled}}(q, c_i): \mathcal{Q} \times \mathcal{C} \to \mathbb{R}\) is:
\begin{equation}\label{eq:score}
    s(q, c_i) = f_{\text{dense}}(\mathbf{q}^{\text{dense}}, \mathbf{H}_i),
\end{equation}
where $f_{\text{dense}}$ indicates a relevance scoring function used in dense retrieval, \(\mathbf{q}^{\text{dense}}\) represents the dense vector of the query, and \(\mathbf{H}_i\) denotes the entangled representation of the chunk \(c_i\). The overall algorithm is depicted in \hyperref[alg:retrieval]{Algorithm} \ref{alg:retrieval}. 

\begin{proposition}[LeSeGR Generality]
Post-retrieval methods with the metric \(f_{\text{hybrid}}: f_{\text{sparse}} \bigoplus f_{\text{dense}} \to \mathbb{R}\), represent a special case of the proposed Lexical-Semantic Graph Retrieval (LeSeGR). This holds when no additional relevant contextual information exists for any chunk in the citation graph, reducing LeSeGR to existing hybrid retrieval used post-retrieval fusion.
\end{proposition}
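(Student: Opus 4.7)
My plan is to show that ``no additional relevant contextual information'' collapses LeSeGR's message passing to a single self-loop computation, after which the retrieval score becomes by construction a function of only the sparse score $\delta_{qi}$ and the dense similarity between $\mathbf{q}^{\text{dense}}$ and $\mathbf{c}_i^{\text{dense}}$---which is exactly what a post-retrieval hybrid operator $f_{\text{sparse}} \bigoplus f_{\text{dense}}$ returns. The argument proceeds in three moves: formalize the no-context condition on $\bar{\mathcal{G}}$, trace the GNN recurrence under that condition, and then exhibit concrete choices of $\mathrm{MSG}$, $\mathrm{AGG}$, and the readout $f_{\text{dense}}$ that reproduce canonical post-retrieval fusions.

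First, I would formalize ``no additional relevant contextual information'' by requiring that, for every $c_i \in \mathcal{C}$, either $\mathcal{N}(i) = \emptyset$ or the learned dense gates satisfy $\alpha_{ij} = 0$ for all $j \in \mathcal{N}(i)$. In either formalization, the only surviving term in the message set $\{\mathrm{MSG}^{(k)}(\delta_{qj} \cdot \alpha_{ij} \cdot \mathbf{h}_j^{(k)}) : j \in \{i\} \cup \mathcal{N}(i)\}$ is the self-loop term, where $\alpha_{ii} = 1$ by definition. Substituting into the update rule, the $k$-th layer reduces to $\mathbf{h}_i^{(k+1)} = \mathrm{AGG}^{(k)}(\mathrm{MSG}^{(k)}(\delta_{qi} \cdot \mathbf{h}_i^{(k)}))$. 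Starting from $\mathbf{h}_i^{(0)} = \mathbf{c}_i^{\text{dense}}$ and iterating $K$ layers shows that the entangled representation depends only on the pair $(\delta_{qi}, \mathbf{c}_i^{\text{dense}})$ and the learned weights $\Phi$, so symbolically $\mathbf{H}_i = F_\Phi(\delta_{qi}, \mathbf{c}_i^{\text{dense}})$. Plugging this into Eq.~\eqref{eq:score} yields $s(q, c_i) = f_{\text{dense}}(\mathbf{q}^{\text{dense}}, F_\Phi(\delta_{qi}, \mathbf{c}_i^{\text{dense}}))$, a scalar function of exactly the sparse and dense signals that feed any post-retrieval fusion.

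Second, I would exhibit architectural instantiations that recover the standard $\bigoplus$ operators. Taking $K = 1$, $\mathrm{MSG}(x) = x$, $\mathrm{AGG}(\{x\}) = x$, and $f_{\text{dense}}$ the inner product immediately gives the multiplicative fusion $s(q, c_i) = \delta_{qi} \cdot f_{\text{dense}}(\mathbf{q}^{\text{dense}}, \mathbf{c}_i^{\text{dense}})$. For weighted-sum fusion $\lambda f_{\text{sparse}} + (1 - \lambda) f_{\text{dense}}$, I would instantiate $\mathrm{MSG}$ as a two-branch affine map that stacks a $\delta_{qi}$-carrying channel alongside $\mathbf{c}_i^{\text{dense}}$ into a block-structured representation, paired with a block-diagonal readout inside $f_{\text{dense}}$ that returns the desired linear combination. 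Together, these constructions show that the canonical instances of $\bigoplus$ are realized as special cases of the collapsed LeSeGR.

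The delicate step is the second one: the message template $\delta_{qj} \cdot \alpha_{ij} \cdot \mathbf{h}_j^{(k)}$ is intrinsically multiplicative in $\delta_{qi}$, so recovering \emph{additive} or otherwise nonhomogeneous post-retrieval operators requires exploiting the freedom in $\mathrm{MSG}^{(k)}$, $\mathrm{AGG}^{(k)}$, and the readout $f_{\text{dense}}$ rather than reading them as fixed scalar products. I would resolve this by treating these components as generic parametric families---the paper deliberately leaves them unspecified---so that $F_\Phi$ can realize any measurable function of $(\delta_{qi}, \mathbf{c}_i^{\text{dense}})$, which in particular subsumes every $\bigoplus$ used by existing hybrid retrieval systems and completes the proof.
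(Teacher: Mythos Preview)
Your collapse argument---reducing the message passing to the self-loop term under $\mathcal{N}(i)=\emptyset$ (or $\alpha_{ij}=0$)---matches the paper's first move exactly, and your immediate recovery of the multiplicative fusion $s(q,c_i)=\delta_{qi}\cdot f_{\text{dense}}(\mathbf{q}^{\text{dense}},\mathbf{c}_i^{\text{dense}})$ is also what the paper obtains. The divergence is in how additive fusion is recovered. The paper does \emph{not} engineer $\mathrm{MSG}$/$\mathrm{AGG}$/readout to realize a weighted sum; instead it simply takes the logarithm of the multiplicative score to get $\log s(q,c_i)\propto \log f_{\text{sparse}}+\log f_{\text{dense}}$, which is already an additive post-retrieval fusion (of monotone transforms of the base scores) and therefore rank-equivalent. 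That is the entire second half of their proof.

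Your alternative route---arguing that the parametric freedom in $\mathrm{MSG}^{(k)}$, $\mathrm{AGG}^{(k)}$, and $f_{\text{dense}}$ lets $F_\Phi$ realize ``any measurable function of $(\delta_{qi},\mathbf{c}_i^{\text{dense}})$''---has a gap you should close if you keep it. By the message template, $\mathrm{MSG}^{(k)}$ only ever receives the \emph{product} $\delta_{qi}\cdot\mathbf{h}_i^{(k)}$, so after the first layer the network sees a function of $\delta_{qi}\cdot\mathbf{c}_i^{\text{dense}}$, not of the pair $(\delta_{qi},\mathbf{c}_i^{\text{dense}})$ separately. Your ``two-branch affine map that stacks a $\delta_{qi}$-carrying channel'' therefore cannot be built from the stated inputs without an extra identifiability assumption (e.g., $\|\mathbf{c}_i^{\text{dense}}\|=1$ so that $\delta_{qi}$ is recoverable as the norm). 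The paper's logarithm trick sidesteps this entirely: it never needs to separate the factors, only to observe that a monotone transform of a product is a sum, which is already an instance of $\bigoplus$. If you want your more general claim to stand, add the normalization assumption explicitly; otherwise, adopting the log step gives a shorter and cleaner argument.
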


\begin{proof}
The entangled representation of a chunk \(c_i\) is given by:
\begin{equation}
    \mathbf{H}_i = \text{AGG}(\{\mathbf{h}_j^{(k)} \mid j \in \{i\} \cup \mathcal{N}(i)\}) = \text{AGG}(\lambda_i \mathbf{m}_i^{(k)}, \lambda_{\mathcal{N}} \mathbf{m}_{\mathcal{N}}^{(k)}),
\end{equation}
where \(\mathbf{m}_i^{(k)}\) and \(\mathbf{m}_{\mathcal{N}}^{(k)}\) denote messages from the central chunk \(c_i\) and its neighboring chunks at layer $k$, respectively. The weights \(\lambda_i\) and \(\lambda_{\mathcal{N}}\) are typically equal and defined as \(\frac{1}{|\mathcal{N}(i)| + 1}\).

When the AGG function is defined as either mean or sum, both of which satisfy the distributive law, the relevance score \(s(q, c_i)\), as defined in \hyperref[eq:score]{Equation} \ref{eq:score}, can be expanded as:
\begin{align}
    s(q, c_i) &= f_{\text{dense}}(\mathbf{q}^{\text{dense}}, \text{AGG}(\lambda_i \mathbf{m}_i^{(k)},\lambda_{\mathcal{N}} \mathbf{m}_{\mathcal{N}}^{(k)})) \\
    &\propto \lambda_i f_{\text{dense}}(\mathbf{q}^{\text{dense}}, \mathbf{m}_i^{(k)}) + \lambda_{\mathcal{N}}f_{\text{dense}} (\mathbf{q}^{\text{dense}}, \mathbf{m}_{\mathcal{N}}^{(k)}).
\end{align}
When no relevant neighbors exist (i.e., \(\mathcal{N}(i) = \emptyset\)), the second term vanishes, leaving:
\begin{align}
    s(q, c_i) &\propto \lambda_i f_{\text{dense}}(\mathbf{q}^{\text{dense}}, \mathbf{m}_i^{(k)}) = \lambda_i \delta_{qi} f_{\text{dense}} (\mathbf{q}^{\text{dense}}, \mathbf{h}_i^{(k)}).
\end{align}
Substituting \(\mathbf{h}_i^{(k)} = \mathbf{c}_i^{\text{dense}}\), the relevance score simplifies to:
\begin{align}
    s(q, c_i) &\propto \delta_{qi} f_{\text{dense}}(\mathbf{q}^{\text{dense}}, \mathbf{c}_i^{\text{dense}}).
\end{align}
Taking the logarithm for further analysis:
\begin{align}
    \log(s(q, c_i)) &\propto \log(\delta_{qi}) + \log(f_{\text{dense}}(\mathbf{q}^{\text{dense}}, \mathbf{c}_i^{\text{dense}})).
\end{align}
Since \(\delta_{qi} = f_{\text{sparse}}(\mathbf{q}^{\text{sparse}}, \mathbf{c}_i^{\text{sparse}})\), this becomes:
\begin{align}
    \log(s(q, c_i)) &\propto \log(f_{\text{sparse}}(\mathbf{q}^{\text{sparse}}, \mathbf{c}_i^{\text{sparse}})) \\&\quad+ \log(f_{\text{dense}}(\mathbf{q}^{\text{dense}}, \mathbf{c}_i^{\text{dense}})),
\end{align}
where the use of \(\log(\cdot)\) requires \( f_{\text{sparse}} \) and \( f_{\text{dense}} \) to be mapped to \( \mathbb{R}^+ \). This requirement can be fulfilled by applying appropriate activation functions or transformations to transform them from \( \mathbb{R} \) into the non-negative domain. In this case, the relevance score is equivalent to the additive fusion of sparse and dense relevance, as used in post-retrieval hybrid methods. Hence, when no contextual information exists (\(\mathcal{N}(i) = \emptyset\)), our graph-contextualized retrieval reduces to the post-retrieval fusion paradigm, demonstrating that the latter is a specific instance of the former.
\end{proof}

When relevant graph contexts are present, the entangled framework dynamically propagates and aggregates sparse and dense signals through structural relationships among neighboring chunks. This enables the model to capture relational dependencies and multi-hop connections in graphs, enhancing retrieval accuracy and effectively utilizing sparse and dense signals from neighbors.

\subsection{Contextualized Graph Retrieval-Augmented Generation (CG-RAG)}\label{sec:generation}
\begin{figure}[tb]
  \centering
  \includegraphics[width=0.5\textwidth]{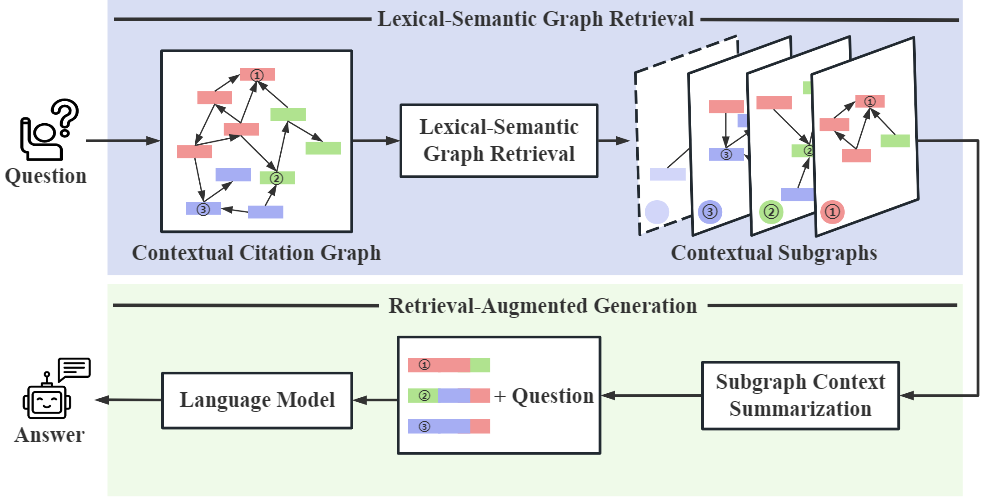}
  \caption{Overview of Contextualized Graph Retrieval-Augmented Generation.}
  \label{fig:rag}
  \vspace{-5pt}
\end{figure}

Given a query $q$ on a citation graph $\mathcal{G}$, we use LeSeGR to retrieve the top \(N\) chunks that are most relevant to the question. These retrieved chunks, together with their contextual subgraph, are then used to generate the answer, as illustrated in \hyperref[fig:rag]{Figure} \ref{fig:rag}. Specifically, for each selected chunk, its corresponding contextual subgraph \(\bar{\mathcal{G}}_i\) is retained for the generation phase, where \(\bar{\mathcal{G}}_i = \bar{\mathcal{G}}[\{i\} \cup \mathcal{N}(i)]\) represents the induced subgraph consisting of chunk \(c_i\) and its direct neighbors. Formally, the set of contextual subgraphs for the Top-\(N\) chunks is defined as:
\begin{equation}
S(\bar{\mathcal{G}}; q) = \bigcup_{c_i \in \text{Top-}N(q, \mathcal{C})} \bar{\mathcal{G}}_i,
\end{equation}
where \(\text{Top-}N(q, C)\) represents the Top-\(N\) chunks ranked by \(s(q, c_i)\). This ensures the retrieval retains both the most relevant chunks and their graph context for downstream tasks.

To effectively utilize the contextual information of each retrieved chunk and adapt to various LLMs, including open-source models such as LLaMA and closed-source models such as ChatGPT, we first summarize the graph context and then concatenate this summarized context with the central chunks to enhance generation.
Specifically, for each contextual subgraph \(\mathcal{\bar{G}}_i \in S(\bar{\mathcal{G}}; q)\), we prompt the LLM to summarize the contextual information surrounding \(c_i\). The summarized context is concatenated with the query to form the final input for generating the answer. The generation process over the citation graph \(\mathcal{G}\) is formally defined as:
\begin{equation}
    p_{\theta}(Y|q, \mathcal{G}) = \arg\max_{\theta} \prod_{i=1}^{n} p_{\theta}(y_i|y_{<i}, X_q, X_{\mathcal{C}}),
    \label{output}
\end{equation}
where \(\theta\) represents the LLM parameters, \(X_q = \text{TextEmbedder}(q)\) is the query embedding, and \(X_{\mathcal{C}}\) is the context embedding:
\begin{equation}
    X_{\mathcal{C}} = \text{TextEmbedder}\left(\left[\text{Summarize}(q, \bar{\mathcal{G}}_i)\right]_{\bar{\mathcal{G}}_i \in S(\bar{\mathcal{G}}; q)}\right),
\end{equation}
representing the embeddings of the concatenated summarized contexts from the retrieved contextual subgraphs. The summarization is performed by the LLM itself, extracting relevant information from the contextual subgraph to aid in generating the final answer.

\section{Experiment}\label{sec:exp}
We conduct experiments to evaluate the effectiveness (\hyperref[exp:main]{Section} \ref{exp:main}) and efficiency (\hyperref[exp:efficiency]{Section} \ref{exp:efficiency}) of Contextualized Graph RAG, along with an analysis of the individual contributions of our technical designs (\hyperref[exp:ablation]{Section} \ref{exp:ablation}). 

\subsection{Settings}
\noindent \textbf{Datasets.}\quad  Our experiments utilize two datasets: PubMedQA-1k and PapersWithCodeQA. PubMedQA-1k is a publicly available dataset, introduced by \citeauthor{jin2019pubmedqa}, and comprises 1,000 question-answer pairs designed for PubMed literature \footnote{\url{https://pubmed.ncbi.nlm.nih.gov/}}, with human-labeled gold-standard retrieval and answer annotations. The original dataset, however, lacks citation information between papers, which we addressed by extracting the references for each paper and constructing a citation graph database with a total of 7,849 papers.

The PapersWithCodeQA dataset was collected from the PapersWithCode website\footnote{\url{https://paperswithcode.com/}}, which tracks research papers across various computer science fields. We used 84 leaderboards\footnote{\url{https://paperswithcode.com/sota}} spanning diverse domains, including \textit{Computer Vision}, \textit{Natural Language Processing}, \textit{Medical}, and \textit{Graphs}. For each leaderboard, we extracted the top 20 papers' contents and references from arXiv\footnote{\url{https://arxiv.org/}} to construct a graph database. The LaTeX content of each paper was preserved as its textual attributes. The dataset comprises 12,171 papers, from which we also crafted 924 questions centered on leaderboard analysis, with ground truth answers derived directly from the leaderboards. These include 420 True/False questions, 420 multiple-choice questions, and 84 generative questions. For generative questions, we first provide the LLMs with the most relevant contexts labeled by humans and allow the LLMs to generate answers. We then evaluate the quality of retrieval-augmented generation by replacing the human-selected contexts with those retrieved by different retrieval methods.

\begin{table}[!tb]
    \centering
    \small
    \caption{Example question-answering pairs and corresponding evaluation metrics.}
    \vspace{-10pt}
    \label{tab:dataset}
    \begin{threeparttable}
    \begin{tabular}{l p{0.32\textwidth}}
    \toprule
        \textbf{True or False: } & Do mitochondria play a role in remodelling lace plant leaves during programmed cell death? Yes, No or Maybe.\\
        \hdashline
        \textbf{Answer: } & Yes\\
        \hdashline
        \textbf{Metrics: } & Accuracy (Acc), $F_1$ Score.\\
        \midrule
        \midrule
        \textbf{Multiple Choice: } & Which model achieves state-of-the-art$^*$ performance on the ADE20K dataset for semantic segmentation? (a) BEiT-3 (b) DINOv2 (c) ONE-PEACE (d) EVA\\
        \hdashline
        \textbf{Answer: } & (c)\\
        \hdashline
        \textbf{Metrics: } & Mean Reciprocal Rank (MRR), Hit@k.\\
        \midrule
        \midrule
        \textbf{Essay: } & Could you provide an overview of the model development for semantic segmentation? \\
        \hdashline
        \textbf{Answer: } & ... Early models like FCN (Fully Convolutional Networks) laid the foundation by adapting classification networks for pixel-level predictions ... Recently, ONE-PEACE emerged as a state-of-the-art model ...\\
        \hdashline
        \textbf{Metrics: } & Coherence, Consistency, Relevance\\
    \bottomrule
    \end{tabular}
    \begin{tablenotes}
        \footnotesize
        \item $^*$Within the citation graph we collected.
    \end{tablenotes}
    \end{threeparttable}
\vspace{-15pt}
\end{table}

\vspace{2pt}
\noindent \textbf{Metrics.}\quad To comprehensively evaluate the performance of RAG systems in retrieving relevant information and generating accurate, contextually appropriate answers, we employ distinct metrics for retrieval and question answering. For retrieval, we use Hit@1 and Hit@3, which measure the proportion of queries where the correct chunk is ranked within the top-1 and top-3 retrieved results, respectively.

For research question answering, example questions and used evaluation metrics are presented in \hyperref[tab:dataset]{Table} \ref{tab:dataset}.
For multiple-choice questions, we use Mean Reciprocal Rank (MRR) and Hit@k to assess ranking quality. For True/False questions, Accuracy (Acc) and $F_1$ score evaluate classification performance. For generative tasks, we leverage the UniEval model \citep{zhong2022towards} to assess Coherence, Consistency, and Relevance, which evaluate logical flow, factual accuracy, and topical alignment, respectively. All metrics adhere to the principle that higher values indicate better model performance.

\begin{table*}[t]
  \vspace{-5pt}
  \centering
  \caption{Evaluation of the retrieval-augmented research question answering. The best performance is highlighted in \textbf{BOLD}, while the second-best performance is \underline{underlined}. Performance of our methods is \colorbox{majorelleblue!25}{highlighted}.}
  \vspace{-10pt}
  \label{tab:main}
  \begin{tabular}{p{0.11\textwidth}p{0.12\textwidth}cc:cc:ccc:cc}
    \toprule
    & & \multicolumn{7}{c}{\textbf{PapersWithCodeQA}} & \multicolumn{2}{c}{\textbf{PubMedQA}} \\
    \cmidrule(lr){3-9}
    \cmidrule(lr){10-11}
    \textbf{Category} & \textbf{Method} &  \textbf{Acc} & \textbf{$F_1$} & \textbf{MRR} & \textbf{Hit@1} & \textbf{Coherence} & \textbf{Consistency} & \textbf{Relevance} & \textbf{Acc} & \textbf{$F_1$} \\
    \midrule
    \multirow{3}{*}{Sparse} 
    & BM25 & 0.689 & 0.617 & 0.765 & 0.736 & 0.905 & 0.858 & 0.859 & 0.662 & 0.604 \\
    & Doc2Query & 0.705 & 0.629 & 0.748 & 0.731 & 0.914 & 0.833 & 0.852 & 0.684 & 0.614 \\
    & BGE-M3 & 0.751 & 0.648 & 0.810 & \underline{0.787} & \underline{0.934} & 0.876 & 0.863 & 0.722 & \underline{0.644} \\
    \midrule
    \multirow{5}{*}{Dense} 
    & MiniLM & 0.730 & 0.644 & 0.782 & 0.758 & 0.919 & 0.872 & 0.828 & 0.712 & 0.641 \\
    & LaBSE & 0.591 & 0.552 & 0.677 & 0.643 & 0.875 & 0.545 & 0.616 & 0.403 & 0.396 \\
    & mContriever & 0.523 & 0.531 & 0.647 & 0.613 & 0.863 & 0.469 & 0.438 & 0.288 & 0.271 \\
    & E5 & 0.579 & 0.560 & 0.659 & 0.628 & 0.872 & 0.521 & 0.544 & 0.363 & 0.360 \\
    & SPAR & 0.611 & 0.583 & 0.643 & 0.609 & 0.886 & 0.549 & 0.537 & 0.392 & 0.384 \\
    \midrule
    \multirow{4}{*}{Hybrid} 
    & Score Fusion & 0.739 & 0.656 & 0.774 & 0.749 & 0.908 & \underline{0.891} & \underline{0.887} & 0.674 & 0.613 \\
    & ColBERT & \underline{0.769} & \underline{0.661} & \underline{0.827} & 0.778 & 0.927 & 0.884 & 0.874 & \underline{0.724} & 0.642 \\
    & CLEAR & 0.618 & 0.575 & 0.667 & 0.643 & 0.894 & 0.623 & 0.685 & 0.468 & 0.456 \\
    \cmidrule{2-11}
    & \textbf{LeSeGR (Ours)} & \cellcolor{majorelleblue!25}{\textbf{0.835}} & \cellcolor{majorelleblue!25}{\textbf{0.703}} & \cellcolor{majorelleblue!25}{\textbf{0.884}} & \cellcolor{majorelleblue!25}{\textbf{0.827}} & \cellcolor{majorelleblue!25}{\textbf{0.956}} & \cellcolor{majorelleblue!25}{\textbf{0.921}} & \cellcolor{majorelleblue!25}{\textbf{0.914}} & \cellcolor{majorelleblue!25}{\textbf{0.778}} & \cellcolor{majorelleblue!25}{\textbf{0.685}} \\
    \bottomrule
  \end{tabular}
\end{table*}

\vspace{2pt}
\noindent \textbf{Implementations.}\quad Experiments are conducted using two NVIDIA A10 GPUs, with Graph Transformer \citep{shi2020masked} serving as the graph encoder. The configuration includes two layers, each featuring four attention heads and a hidden dimension size of 1024. The maximum chunk length is set to 8,192 tokens. Training is conducted using CrossEntropy, with 10\% of the samples labeled with gold-standard retrieval, and optimized using the AdamW optimizer \citep{loshchilov2017decoupled}. The relevance scoring function for both dense and sparse representations is dot product. For generation, GPT-4 is employed through the OpenAI API, specifically leveraging the \texttt{gpt-4o-2024-05-13} model version.

\vspace{2pt}
\noindent \textbf{Baseline methods.}\quad To evaluate the proposed graph-contextualized retrieval method, we benchmark it against several state-of-the-art baselines renowned for their effectiveness in the retrieval phase across various RAG systems. These retrieval techniques are categorized into three groups: sparse retrieval, dense retrieval, and hybrid retrieval.
\begin{itemize}[left=0pt]
    \item \textbf{Sparse Retrieval}: 
    \textit{BM25} \citep{robertson2009probabilistic}, an advanced refinement of TF-IDF \citep{salton1975vector}, enhances relevance scoring by incorporating probabilistic modeling, term saturation, and document length normalization, providing robust performance in keyword-based retrieval tasks. \textit{Doc2Query} \citep{nogueira2019document} improves sparse retrieval by generating synthetic queries for documents using a pre-trained language model (PLM). \textit{BGE-M3} \citep{chen2024bge} utilizes a multi-vector architecture to create robust representations, integrating contrastive learning and knowledge distillation techniques.
    \item \textbf{Dense Retrieval}: \textit{MiniLM} \citep{wang2020minilm} is a lightweight transformer model that uses deep self-attention distillation to create dense embeddings. \textit{LaBSE} \citep{feng2020language} is a bilingual embedding model, leveraging dual encoders and a large-scale parallel corpus to ensure semantic alignment across languages. \textit{mContriever} \citep{izacard2021unsupervised} employs unsupervised contrastive learning to train dense retrievers, focusing on encoding diverse and nuanced contextual information. \textit{E5} \citep{wang2022text} optimizes embeddings for text retrieval by integrating explicit supervision from retrieval datasets and task-specific fine-tuning. \textit{SPAR} \citep{chen2021salient} employs salient phrase representation learning to bridge dense and sparse retrieval, utilizing a dual encoder architecture that explicitly models both phrase-level and document-level semantics.
    \item \textbf{Hybrid Retrieval}: \textit{ScoreFusion} \citep{kuzi2020leveraging} combines the output scores of sparse and dense retrieval models to produce a unified ranking. \textit{ColBERT} \citep{khattab2020colbert} introduces late interaction to compute pairwise term similarities between query and document embeddings, enabling efficient and fine-grained integration of sparse and dense signals. \textit{CLEAR} \citep{gao2021complement} employs a residual learning framework to combine sparse and dense representations, ensuring complementary signals are utilized for improved retrieval performance.
\end{itemize}

\subsection{Main Results} \label{exp:main}
Our proposed Contextualized Graph Retrieval-Augmented Generation with LeSeGR achieves state-of-the-art performance across all tasks and datasets, as shown in \hyperref[tab:main]{Table} \ref{tab:main}. For true/false questions, LeSeGR significantly surpasses sparse, dense, and hybrid baselines in both accuracy (Acc) and $F_1$ scores, demonstrating its capability to effectively capture domain-specific terms and semantic nuances. Hybrid methods such as ScoreFusion combine sparse and dense signals but fail to achieve the deeper integration of retrieval signals offered by LeSeGR. Through its graph-structured integration, LeSeGR dynamically propagates and entangles retrieval signals from contextual information, fully leveraging the relationships embedded in the graph structure. This advanced integration translates to superior performance, particularly in metrics such as MRR and Hit@1.

\begin{table}[h!]
  \vspace{-5pt}
  \centering
  \caption{Evaluation of the retrieval effectiveness on the citation graph of PubMed (PubMedQA). The best performance is highlighted in \textbf{BOLD}, while the second-best performance is \underline{underlined}.}
  \vspace{-10pt}
  \label{tab:retrieval}
  \begin{tabular}{p{0.11\textwidth}p{0.12\textwidth}cc}
    \toprule
    \textbf{Category} & \textbf{Method} &  \textbf{Hit@1} & \textbf{Hit@3} \\
    \midrule
    \multirow{3}{*}{Sparse} & BM25 & 0.835 & 0.912 \\
    & Doc2Query & 0.832 & 0.930 \\
    & BGE-M3 & \underline{0.915} & 0.960 \\
    \midrule
    \multirow{5}{*}{Dense} 
    & MiniLM & 0.887 & 0.945 \\
    & LaBSE & 0.305 & 0.471 \\
    & mContriever & 0.472 & 0.496 \\
    & E5 & 0.231 & 0.355 \\
    & SPAR & 0.256 & 0.385 \\
    \midrule
    \multirow{4}{*}{Hybrid} 
    & Score Fusion & 0.829 & 0.925 \\
    & ColBERT & 0.913 & \underline{0.968} \\
    & CLEAR & 0.470 & 0.612 \\
    \cmidrule{2-4}
    & \textbf{LeSeGR} (Ours) & \cellcolor{majorelleblue!25}{\textbf{0.961}} & \cellcolor{majorelleblue!25}{\textbf{0.987}} \\
    \bottomrule
  \end{tabular}
  \vspace{-10pt}
\end{table}

In generative tasks, our method demonstrates significant improvement in Coherence, Consistency, and Relevance by leveraging its entangled sparse-dense representation and graph-based contextualization. Unlike hybrid baselines such as ColBERT, which emphasizes token-level interactions but overlooks graph-level relationships, LeSeGR's graph encoder dynamically aggregates signals across interconnected chunks. This enhances contextual understanding, enabling high-quality and contextually rich text generation. For instance, LeSeGR achieves a Coherence score of 0.956 on PapersWithCodeQA, outperforming ColBERT's 0.927. These results underscore the unique strengths of LeSeGR in effectively bridging sparse and dense retrieval with graph-based contextualization, thereby advancing the retrieval-augmented generation process to new levels of effectiveness.

\hyperref[tab:retrieval]{Table} \ref{tab:retrieval} demonstrates that LeSeGR significantly outperforms all baselines in the retrieval phase, including sparse, dense, and hybrid approaches. Our method achieves superior retrieval accuracy by effectively entangling sparse and dense signals within the graph structure, allowing contextual information to enhance relevance scoring. Unlike post-retrieval fusion methods, such as ScoreFusion, which combine sparse and dense signals after separate retrieval processes, our approach dynamically integrates these signals during retrieval, leading to more coherent and effective utilization of both lexical and semantic information. Furthermore, while ColBERT performs token-level interactions for fine-grained relevance, it operates at the query-document level without fully leveraging the structural relationships present in citation graphs. In contrast, our method extends relevance computation to the graph structure, propagating and aggregating signals across related chunks to capture multi-hop and relational dependencies. This deeper integration of graph context and entangled sparse-dense signals enables our method to outperform ColBERT, achieving the highest Hit@1 and Hit@3 scores.

\subsection{Efficiency Analysis} \label{exp:efficiency}

As shown in \hyperref[tab:efficiency]{Table} \ref{tab:efficiency}, LeSeGR demonstrates competitive retrieval efficiency on the citation graph of arXiv (PapersWithCodeQA). It strikes a balance between memory usage and latency, leveraging GPU computation effectively. LeSeGR achieves faster query latency (403.94 ms) compared to ColBERT (561.91 ms) while maintaining a moderate GPU memory footprint (1,921 MB). In contrast, ScoreFusion exhibits high CPU memory usage (5,655 MB) and slower query speeds, whereas LeSeGR optimizes GPU utilization by integrating both sparse and dense retrieval signals into the message passing process of the graph encoder. Additionally, LeSeGR outperforms CLEAR in query speed while maintaining similar memory usage. These results highlight LeSeGR's efficiency and scalability for large-scale graph-based retrieval tasks without compromising effectiveness.

\begin{table}[!h]\small
  \vspace{-5pt}
  \centering
  \caption{Evaluation of the retrieval efficiency on the citation graph of arXiv (PapersWithCodeQA).}
  \label{tab:efficiency}
  \vspace{-10pt}
  \begin{tabular}{p{0.09\textwidth}cccc}
  \toprule
  \multirow{2}{*}{Method} & \multicolumn{2}{c}{\textbf{CPU \& GPU Memory}} & \multicolumn{2}{c}{\textbf{Indexing \& Query Latency}}\\
  \cmidrule(lr){2-3} \cmidrule(lr){4-5} 
  & (MB) & (MB) & (ms) & (ms) \\
  \midrule
  Score Fusion & 5,655 & 770 & 43.94 & 1,580.14 \\
  ColBERT & 0 & 12,674 & 12.40 & 561.91\\
  CLEAR & 0 & 1,538 & 205.36 & 16.07\\ 
  \midrule
  \textbf{LeSeGR} & 0 & 1,921 & 19.22 & 403.94\\
  \bottomrule
  \end{tabular}
\end{table}

\subsection{Ablation Studies} \label{exp:ablation}

The ablation studies performed for each influential factor in LeSeGR is shown in \hyperref[tab:ablation]{Table} \ref{tab:ablation}. In this experiment, we evaluate LeSeGR on the citation graph of PubMed, which contains 7,849 papers. Our main observations are as follows:

\begin{table}[h]
    \centering
    \caption{Ablation studies on PubMedQA. The default settings of LeSeGR are marked with \textit{*}.}
    \label{tab:ablation}
    \vspace{-10pt}
    \begin{tabular}{p{1.8cm}lccc}
        \toprule
        \textbf{Factor} & \textbf{Setting} & \textbf{Hit@1} & \textbf{Hit@3} \\
        \midrule
        \multirow{3}{2cm}{Graph\\Encoder}
        & GAT & 0.939 & 0.968 \\
        & GCN & 0.955 & 0.976 \\
        & {Graph Transformer}$^*$ & \textbf{0.961} & \textbf{0.987}  \\
        \midrule
        \multirow{3}{2cm}{Top-$n$\\Context} 
        & 2 & 0.931 & 0.949 \\
        & 8 & 0.950 & 0.984 \\
        & {4}$^*$ & \textbf{0.961} & \textbf{0.987}  \\
        \midrule
        \multirow{4}{2cm}{Sparse\\Signal} 
        & TF-IDF & 0.903 & 0.962 \\
        & BM25 & 0.919 & 0.962 \\
        & Doc2Query & 0.926 & 0.964 \\
        & {BGE-M3}$^*$ & \textbf{0.961} & \textbf{0.987}  \\ 
        \midrule 
        \multirow{3}{2cm}{Dense\\Signal}
        & E5 & 0.676 & 0.765 \\
        & mContriever & 0.838 & 0.883 \\
        & {MiniLM}$^*$ & \textbf{0.961} & \textbf{0.987} \\
        \bottomrule
    \end{tabular}
\end{table}

\begin{itemize}[left=0pt]
    \item \textbf{Graph Encoder.}\quad Among Graph Attention Networks (GAT) \citep{velivckovic2017graph}, Graph Convolutional Networks (GCN) \citep{kipf2016semi}, and Graph Transformer \citep{shi2020masked}, Graph Transformer achieves the highest Hit@1 and Hit@3 scores of 0.961 and 0.987, respectively. Notably, regardless of which graph encoder is employed, our LeSeGR method still consistently achieves the best retrieval performance when compared to the baselines in \hyperref[tab:efficiency]{Table} \ref{tab:retrieval}. This underscores LeSeGR's superior ability to model complex relationships and effectively aggregate contextual information for retrieval.
    \item \textbf{Top-\(n\) Context.}\quad We further assess three configurations for the number of contexts connected via inter-document edges, i.e., Top-\(n\). The \(n = 4\) setting achieves the best results, striking a balance between sufficient contextual inclusion and noise reduction. Smaller values, such as \(n = 2\), restrict the scope of context, while larger values, such as \(n = 8\), may introduce irrelevant information, diluting the positive impact of relevant context and reducing retrieval effectiveness. It is anticipated that if the chunk length is sufficiently large, retaining only the top-1 relevant chunk will suffice.
    \item \textbf{Retrieval Signals.}\quad We compare four sparse representation methods. BGE-M3 outperforms other sparse encoders, achieving the highest scores, as its ability to integrate lexical and semantic features is critical for domain-specific term matching. TF-IDF and BM25, while strong in lexical precision, lack semantic adaptability. In addition, we also compare three dense representation methods, with MiniLM delivering the best performance. MiniLM's compact representation effectively captures semantic nuances, making it better suited for diverse queries and documents. As a whole, however, combining two expressive retrieval models with our LeSeGR framework results in stronger retrieval performance. Notably, the performance of LeSeGR appears to be primarily constrained by the quality of the dense retrieval signal.
\end{itemize}

\section{Conclusion}\label{sec:conclution}

In this work, we introduce Lexical-Semantic Graph Retrieval (LeSeGR), a novel framework that integrates sparse, dense, and graph-structured retrieval signals for complex and structured database. Based on LeSeGR, we present Contextualized Graph Retrieval-Augmented Generation (CG-RAG) for research question answering. By leveraging a contextual citation graph, our approach effectively captures intra- and inter-document relationships, enabling a dynamic propagation of contextual information through an entangled hybrid retrieval paradigm. This paradigm bridges lexical precision and semantic understanding while generalizing to existing retrieval methods. Furthermore, CG-RAG incorporates a graph-aware generation strategy, enhancing the contextual richness of generated responses. Extensive experiments across multiple citation networks demonstrate the superior performance of CG-RAG based on LeSeGR, achieving state-of-the-art results in retrieval metrics such as Hit@1 and generation metrics such as Coherence and Relevance. Our findings underscore the effectiveness of graph-contextualized representations in advancing the capabilities of retrieval-augmented generation for citation graphs, setting a new benchmark for retrieval-augmented research question answering.


\bibliographystyle{ACM-Reference-Format}
\bibliography{sample-base}



\end{document}